\journal{Methods}
\begin{document}
\begin{frontmatter}

\title{Signalling entropy: a novel network-theoretical framework for systems analysis and interpretation of functional omic data}
\author[f1,f2]{Andrew E. Teschendorff\corref{l1}}
\author[f3]{Peter Sollich}
\author[f3]{Reimer Kuehn}
\address[f1]{CAS-MPG Partner Institute for Computational Biology, Chinese Academy of Sciences, Shanghai Institute for Biological Sciences, 320 Yue Yang Road, Shanghai 200031, China.}
\address[f2]{Statistical Cancer Genomics, Paul O'Gorman Building, UCL Cancer
Institute, University College London, London WC1E 6BT, UK.}
\address[f3]{Department of Mathematics, King's College London, London WC2R 2LS, UK.}
\cortext[l1]{a.teschendorff@ucl.ac.uk}

\begin{abstract}
A key challenge in systems biology is the elucidation of the underlying principles, or fundamental laws, which determine the cellular phenotype. Understanding how these fundamental principles are altered in diseases like cancer is important for translating basic scientific knowledge into clinical advances. While significant progress is being made, with the identification of novel drug targets and treatments by means of systems biological methods, our fundamental systems level understanding of why certain treatments succeed and others fail is still lacking. We here advocate a novel methodological framework for systems analysis and interpretation of molecular omic data, which is based on statistical mechanical principles. Specifically, we propose the notion of cellular signalling entropy (or uncertainty), as a novel means of analysing and interpreting omic data, and more fundamentally, as a means of elucidating systems-level principles underlying basic biology and disease. We describe the power of signalling entropy to discriminate cells  according to differentiation potential and cancer status. We further argue the case for an empirical cellular entropy-robustness correlation theorem and demonstrate its existence in cancer cell line drug sensitivity data. Specifically, we find that high signalling entropy correlates with drug resistance and further describe how entropy could be used to identify the achilles heels of cancer cells. In summary, signalling entropy is a deep and powerful concept, based on rigorous statistical mechanical principles, which, with improved data quality and coverage, will allow a much deeper understanding of the systems biological principles underlying normal and disease physiology.
\end{abstract}

\begin{keyword}
entropy; network; signalling; genomics; drug resistance; cancer; differentiation; stem cell

\end{keyword}

\end{frontmatter}

\section{Introduction}
Recent advances in biotechnology are allowing us to measure cellular properties at an unprecedented level of detail \cite{Hood2013}. For instance, it is now possible to routinely measure various molecular entities (e.g. DNA methylation, mRNA and protein expression, SNPs) genome-wide in hundreds if not thousands of cellular specimens \cite{TCGAovc2011}. In addition, other molecular data detailing interactions between proteins or between transcription factors and regulatory DNA elements are growing at a rapid pace \cite{Hood2013}. All these types of data are now widely referred to collectively as ``omic'' data. The complexity and high-dimensional nature of this omic data presents a daunting challenge to those wishing to analyse and interpret the data \cite{Hood2013}. The difficulty of analysing omic data is further compounded by the inherent complexity of cellular systems. Cells are prime examples of organized complex systems, capable of highly stable and predictable behaviour, yet an understanding of how this deterministic behaviour emerges from what is a highly complex and probabilistic pattern of dynamic interactions between numerous intra and extracellular components, still eludes us \cite{Hood2013}. Thus, elucidating the systems-biological laws or principles dictating cellular phenotypes is also key for an improved analysis and interpretation of omic data. Furthermore, important biological phenomena such as cellular differentiation are fundamentally altered in diseases like cancer \cite{Hanahan2011}. Hence, an attempt to understand how cellular properties emerge at a systems level from the properties seen at the individual gene level is not only a key endeavour for the systems biology community, but also for those wanting to translate basic insights into effective medical advances \cite{Barabasi2011,Creixell2012}.\\
It is now well accepted that at a fundamental level most biological systems are best modeled in terms of spatial interactions between specific entities (e.g. neurons in the case of the brain), which may or may not be dynamically changing in time \cite{Weng1999,Barabasi2013}. It therefore seems natural to also use the mathematical and physical framework of networks to help us analyse and interpret omic data at a systems level \cite{Teschendorff2010bmc,West2012}. Indeed, the cellular phenotype is determined to a large extent by the precise pattern of molecular interactions taking place in the cell, i.e. a molecular interaction network \cite{Barabasi2004}. Although this network has spatial and dynamic dimensions which at present remain largely unexplored due to technological or logistical limitations, there is already a growing number of examples where network-based analysis strategies have been instrumental \cite{Robin2013,Erler2012,Lee2012}. For instance, a deeper understanding of why sustained treatment with EGFR inhibitors can lead to dramatic sensitization of cancer cell lines to cytotoxic agents was possible thanks to a systems approach \cite{Lee2012}. Another study used reverse engineering network approaches to identify and validate drug targets in glioblastoma multiforme, to be further tested in clinical trials \cite{Carro2010}. What is key to appreciate here is that these successes have been achieved in spite of noisy and incomplete data, suggesting that there are simple yet deep systems biological principles underlying cellular biology that we can already probe and exploit with current technology and data. Thus, with future improvements in data quality and coverage, network-based analysis frameworks will play an ever increasing and important role in systems biology, specially at the level of systems analysis and interpretation \cite{Barabasi2011}. Therefore, it is also imperative to develop novel, more powerful, network-theoretical methods for the systems analysis of omic data.\\
In adopting a network's perspective for the analysis and interpretation of omic data, there are in principle two different (but not mutually exclusive) approaches one can take. One possibility is to infer (i.e. reverse engineer) the networks from genome-wide data \cite{Lefebvre2012}. Most of these applications have done this in the context of gene expression data, with the earliest approaches using clustering or co-expression to postulate gene interdependencies \cite{Segal2004}. Partial correlations and Graphical Gaussian Models have proved useful as a means of further refining correlation networks by allowing one to infer the more likely direct interactions while simultaneously also filtering out those which are more likely to be indirect \cite{Opgen-Rhein2007}. These methods remain popular and continue to be studied and improved upon \cite{Barzel2013,Feizi2013}. Other methods have drawn on advanced concepts from information theory, for instance ARACNe (``Algorithm for the Reconstruction of Accurate Cellular Networks'') has been shown to be successful in infering regulatory networks in B-cells \cite{Lefebvre2012}.\\
In stark contrast to reverse engineering methods, another class of algorithms have used structural biological networks from the outset, using these as scaffolds to integrate with omic data. Specifically, by using a structural network one can sparsify the correlation networks inferred from reverse-engineering approaches, thus providing another means of filtering out correlations that are more likely to be indirect \cite{West2012}. Besides, integration with a structural network automatically provides an improved framework for biological interpretation \cite{Barabasi2004,Chuang2007,Dutkowski2011,Mitra2013}. The structural networks themselves are typically derived from large databases, which detail literature curated experimentally verified interactions, including interactions derived from Yeast 2 Hybrid screens (Y2H) \cite{Prasad2009}. The main example is that of protein protein interaction (PPI) maps, which have been generated using a number of different complementary experimental and in-silico approaches, and merging these maps from these different sources together has been shown to be an effective approach in generating more comprehensive high-confidence interaction networks \cite{Cerami2011,Vidal2011}. PPI networks have been used mainly as a means of integrating and analysing gene expression data (see e.g. \cite{Chuang2007,Ulitsky2007,Taylor2009}). More recently, this approach has also been successfully applied in the DNA methylation context, for instance it has been shown that epigenetic changes associated with age often target specific gene modules and signalling pathways \cite{West2013}.\\
Another class of methods that have used structural networks, PPIs in particular, have integrated them with gene expression data to define an approximation to the underlying signaling dynamics on the network, thus allowing more in-depth exploration of the interplay between network topology and gene expression. Typically, these studies have used the notion of random walks on weighted graphs where the weights are constructed from differential expression statistics, and where the aim is to identify nodes (genes) in the network which may be important in dictating the signaling flows within the pathological state. For instance, among these random walk methods is NetRank, a modification of the Google PageRank algorithm, which was able to identify novel, robust, network based biomarkers for survival time in various cancers \cite{Roy2012,Winter2012}. Other random walk based approaches, aimed at identifying causal drivers of specific phenotypes (e.g. expression or cancer), have modeled signal transduction between genes in the causal and phenotypic layers as flows in an electric circuit diagram, an elegant formulation capable of not only identifying the likely causal genes but also of tracing the key pathways of information flow or dysregulation \cite{Suthram2008,Kim2011}. Random walk theory has also been employed in the development of differential network methodologies. An example is NetWalk \cite{Komurov2010}, which is similar to NetRank but allows differential signaling fluxes to be inferred. This approach was successful in identifying and validating the glucose metabolic pathway as a key determinant of lapatinib resistance in ERBB2 positive breast cancer patients \cite{Komurov2012}.
\\
Another important concept to have emerged recently is that of network rewiring \cite{Ideker2010,Califano2011,Ideker2012}. This refers to the changes in interaction patterns that accompany changes in the cellular phenotype. Network rewiring embodies the concept that it is the changes in the interaction patterns, and not just the changes in absolute gene expression or protein activity, that are the main determinants of the cellular phenotype. That network rewiring may be key to understanding cellular phenotypes was most convincingly demonstrated in a differential epistasis mapping study conducted in yeast cells exposed to a DNA damaging agent \cite{Ideker2010}. Specifically, what this study demonstrated is that responses to perturbations or cellular stresses are best understood in terms of the specific rewiring of protein complexes and functional modules. Thus, this conceptual framework of network rewiring may apply equally well to the genetic perturbations and cellular stresses underlying disease pathologies like cancer.\\
In this article we advocate a network-theoretical framework based on statistical mechanical principles and more specifically on the notion of signalling entropy \cite{Teschendorff2010bmc,West2012}. This theoretical framework integrates gene expression (but in principle also other functional data) with a PPI network, merging existing concepts such as signaling dynamics (i.e. random walks) and network rewiring with that of signalling entropy. In previous work we have shown how signalling entropy (i) provides a proxy to the elevation in Waddington's epigenetic landscape, correlating with a cell's differentiation potential \cite{Banerji2013}, (ii) how it can be used to identify signaling pathways and nodes important in differentiation and cancer \cite{Banerji2013,Teschendorff2010bmc,West2012}, and (iii) how it predicts two cancer system-omic hallmarks: (a) cancer is characterised by an increase in signalling entropy and (b) local signaling entropy changes anti-correlate with differential gene expression \cite{Teschendorff2010bmc,West2012}. Here, we present and unify the different signaling entropy measures used previously and further explore a novel application of signalling entropy to understanding drug sensitivity profiles in cancer cell lines. Specifically, we first use simulated data to justify the existence of an entropy-robustness theorem, and subsequently provide empirical evidence for this theorem by demonstrating that increases in local signalling entropy correlate with drug resistance (robustness). We further show the importance of network topology in dictating the signalling entropy changes underlying drug response. In addition, we provide R-functions implementing the entropy rate calculation and ranking of genes according to differential entropy, all freely available from {\it sourceforge.net/projects/signalentropy/files/} .

\section{Materials and Methods}

\subsection{Basic rationale and motivation for Signalling Entropy: understanding systems biology through uncertainty}
Loosely defined, entropy of a system, refers to a measure of the disorder, randomness or uncertainty of processes underlying the system's state. In the context of a single cell, signalling entropy will reflect the amount of overall disorder, randomness or uncertainty in how information, i.e. signaling, is passed on in the molecular interaction network. At a fundamental level, all signaling in a cell is probabilistic, determined in part by the relative cellular concentrations of the interacting molecules. Hence, in discussing cellular signalling entropy, it is useful to picture a molecular interaction network in which edges represent possible interactions and with the edge weights reflecting the relative probabilities of interaction ({\bf Fig.1A}). Thus, an interaction network in a high-entropy state is characterised by signaling interaction probabilities that are all fairly similar in value, whereas a low-entropy state will be characterised by specific signalling interactions possessing much higher weights ({\bf Fig.1A}).

\begin{figure}[ht]
\begin{center}
\includegraphics[scale=0.6]{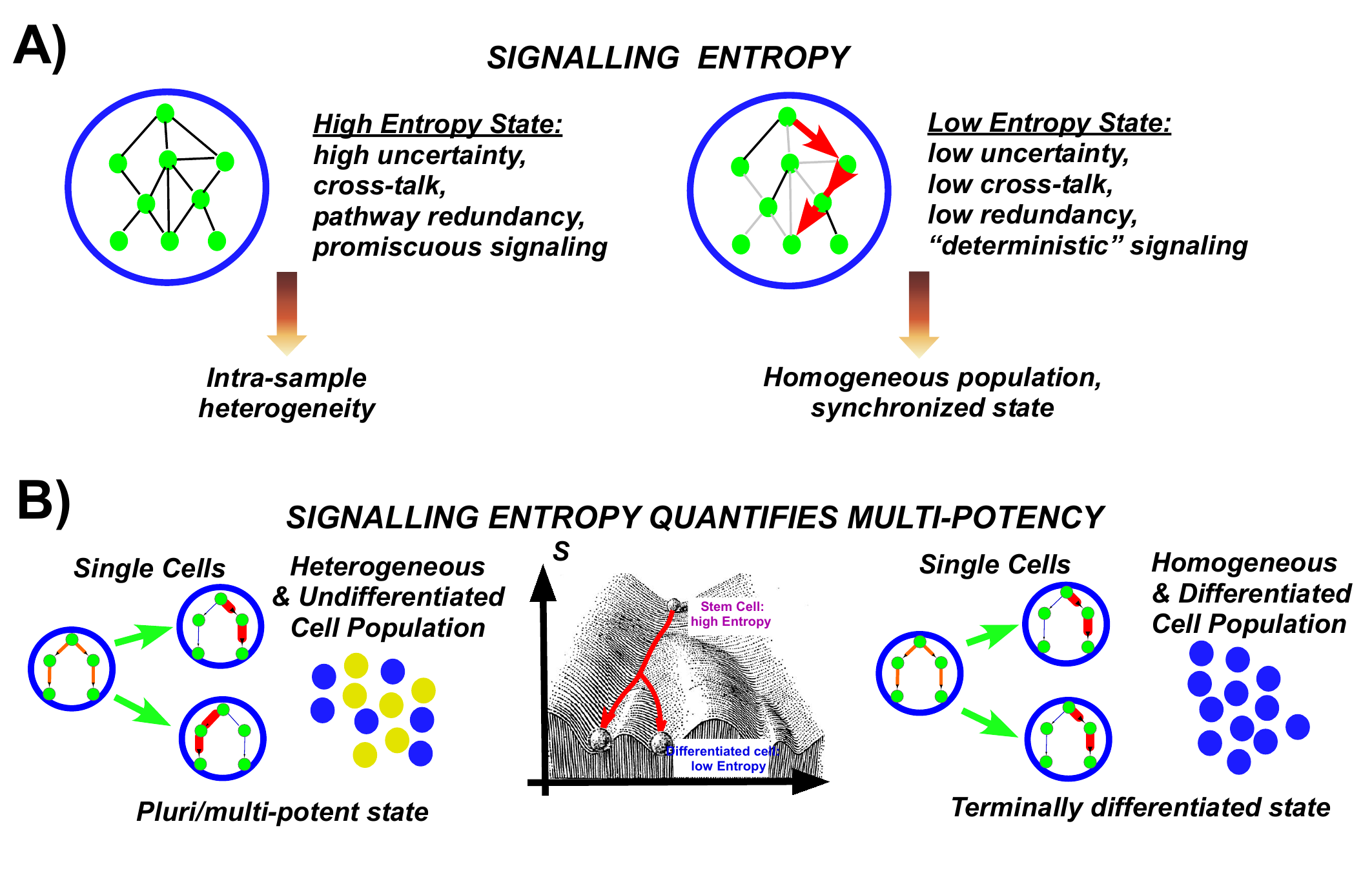}
%
%
\caption{{\bf Signalling Entropy: understanding systems biology through uncertainty. A)} A caricature model of a cellular interaction network with edge widths/color indicating the relative probabilities of interaction. On the left and right, we depict states of high and low signalling entropy, respectively. At the cellular population level, this translates into samples of high and low intra-sample heterogeneity, respectively. {\bf B)} Signalling entropy correlates with pluripotency as demonstrated in our previous work (Banerji et al 2013). The pluripotent state is a highly promiscuous signaling state, generating high intra-sample heterogeneity, and allowing the stem cell population to differentiate into any cell type. In contrast, in a terminally differentiated state, signaling is almost deterministic, reflecting activation of very specific pathways in the majority of cells, leading to a highly homogeneous and differentiated cell population. Thus, signalling entropy defines the height in Waddington's differentiation landscape. 
}
\label{fig:1}       
\end{center}
\end{figure}

Why would this type of signalling entropy, loosely defined as the amount of uncertainty in the signaling interaction patterns, be useful to systems biology? One way to think of signalling entropy is as representing signaling promiscuity, which has been proposed as a key systems feature underlying the pluripotent or multipotent capacity of cells ({\bf Fig.1B}) \cite{Chang2008nat,Macarthur2013,Furusawa2009,Furusawa2012}. Indeed, it has been suggested that pluripotency is an intrinsic statistical mechanical property, best defined at the cellular population level \cite{Macarthur2013}. Specifically, it has been demonstrated that pluripotent stem cells exhibit remarkable heterogeneity in gene expression levels, including well known stem cell markers such as {\it NANOG} \cite{Macarthur2013}. It is also well known that a large number of genes, many encoding transcription factors, exhibit low-levels of expression in stem cells, yet simultaneously are being kept in a poised chromatin state, allowing immediate activation if this were required \cite{Lee2006pcgt}. Thus, in a pluripotent stem cell like state, signal transduction is in a highly egalitarian and, thus, promiscuous state, i.e. a state of high signalling entropy. Conversely, differentiation leads, by necessity, to activation of specific transcription factors and pathways and thus to a lowering in the uncertainty of signaling patterns, and thus to a lowering of entropy. We recently demonstrated, using gene expression profiles of over 800 samples, comprising cells of all major stages of differentiation, including human embryonic stem cells (hESCs), induced pluripotent stem cells (iPSCs), multipotent cell types (e.g. hematopoietic stem cells (HSCs)), and terminally differentiated cells within these respective lineages, that signalling entropy not only correlates with differentiation potential but that it provides a highly quantitative measure of potency \cite{Banerji2013}. Indeed, we showed that signalling entropy provides a reasonably good approximation to the energy potential in Waddington's epigenetic landscape \cite{Banerji2013}.\\
Here we decided to explore the concept of signalling entropy in relation to cellular robustness and specifically to drug resistance in cancer. That signalling entropy may be informative of a cell's robustness is a proposal that stems from a general (but unproven) theorem, first proposed by Manke and Demetrius \cite{Demetrius2004,Demetrius2005,Manke2006}: namely, that a system's entropy and robustness are correlated. Mathematically, this can be expressed as $\Delta S\Delta R>0$, which states that a positive change in a system's entropy (i.e. $\Delta S>0$) must lead to an increase in robustness ($\Delta R>0$). Now, cells are fairly robust entities, having evolved the capacity to buffer the intra-and-extracellular stresses and noise which they are constantly exposed to \cite{Stelling2004,Chang2008nat}. Much of this overall stability and robustness likely stems from the topological features of the underlying signaling and regulatory networks, for instance features such as scale-freeness and hierarchical modularity, which are thought to have emerged through natural evolutionary processes such as gene duplication \cite{Barabasi2004,Han2004,LiJingJing2010}. However, another key feature which contributes to cellular robustness is cross-talk and signalling pathway redundancy \cite{Stelling2004}. Pathway redundancy refers to a situation where a cell has the choice of transmitting signals via two or more possible routes. In the language of statistical mechanics, this corresponds to a state of high uncertainty (or entropy) in signaling. High signalling entropy could thus underpin a cell's robustness to perturbations, suggesting that a cell's entropy and robustness may indeed be correlated ({\bf Fig.2A}). Consistent with this, pathway redundancy is also well recognized to be a key feature underlying drug resistance of cancer cells \cite{Engelman2008}.

\begin{figure}[ht]
\begin{center}
\includegraphics[scale=0.5]{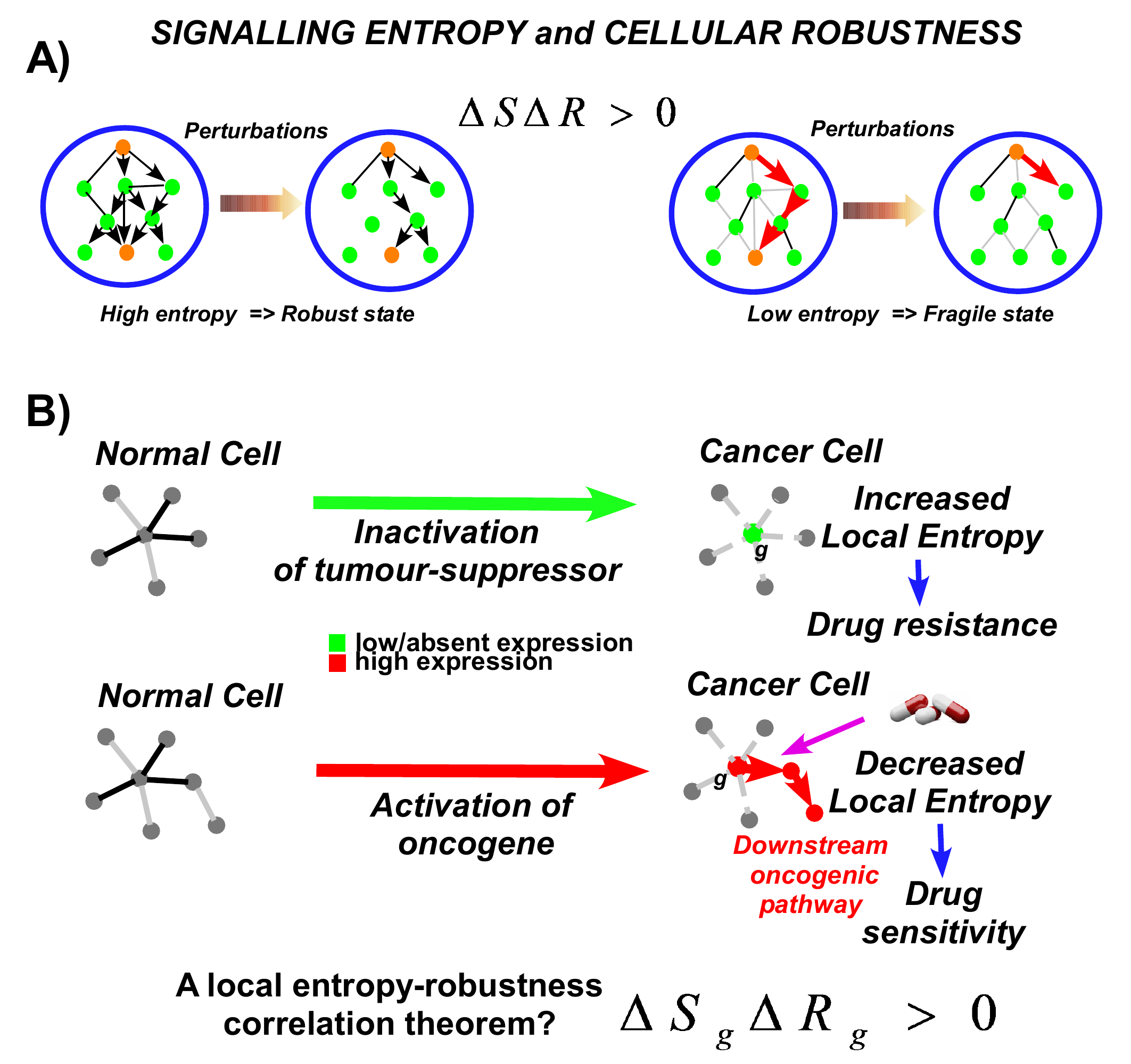}
%
%
\caption{{\bf Signalling entropy and cellular robustness:}  {\bf A)} Signalling entropy ought to correlate with cellular robustness. The inequality encapsulates this information by stating that a decrease in signalling entropy (i.e. if $\Delta S <0$), then the system's robustness $R$ must also decrease, i.e. $\Delta R<0$, so that the product $\Delta S\Delta R>0$. Observe how in the low entropy state, random removal of edges through e.g. inactivating mutations, can lead to deactivation of a key signaling pathway connecting a given start and end nodes (shown in orange). In the high entropy state, the same perturbations do not prevent signal transduction between the two orange nodes. {\bf B)} Depicted are the effects of two major forms of cancer perturbation. In the upper panel, inactivation (typically of tumour suppressors), leads to underexpression and a corresponding loss of correlations/interactions with neighbors in the PPI network. This is tantamount to a state of increased entropy and drug intervention is unlikely to be effective. In the lower panel, we depict the case of an oncogene, which is overexpressed in cancer. This overexpression leads to activation of a specific oncogenic pathway which results in oncogene addiction and increased sensitivity to targeted drug intervention. Thus, local signalling entropy and robustness (as determined by response to a drug), may also correlate locally. 
}
\label{fig:2}       
\end{center}
\end{figure}

 Further supporting the notion that entropy and robustness may be correlated, we previously showed that (i) cancer is characterised by a global increase in signalling entropy compared to its respective normal tissue \cite{West2012}, in line with the observation that cancer cells are specially robust to generic perturbations, and (ii) that a gene's differential entropy between normal and cancer tissue is generally speaking anticorrelated with its differential expression \cite{West2012}, consistent with the view that cancer cells are specially sensitive to drug interventions that target overexpressed oncogenes, a phenomenon known as oncogene addiction ({\bf Fig.2B}) \cite{Hanahan2011}. Interestingly, the observations that differential entropy and differential expression are anti-correlated and that cancer is characterised globally by an increase in entropy \cite{West2012}, are also consistent with the prevalent view that most driver mutations are indeed inactivating, targeting tumor suppressor genes \cite{Sjoblom2006,Wood2007,Vogelstein2013}. Hence, based on all of these observations and insights, we posited that signalling entropy could also prove useful as a means of predicting drug resistance in cancer cells.

\subsection{The Signalling Entropy Method: definitions and construction}
Briefly, we review the definitions and construction of signalling entropy as used in our previous studies \cite{Teschendorff2010bmc,West2012,Banerji2013}. The construction relies on a comprehensive and high-confidence PPI network which is integrated with gene expression data \cite{Cerami2011,West2012} (see {\bf Appendix A}). Briefly, the PPI is used as a scaffold, and edge weights are constructed from the gene expression data to approximate the interaction or signaling probabilities between the corresponding proteins in the PPI. Thus, the  gene expression data provides the biological context in which to modulate the PPI interaction probabilities. To compute signalling entropy requires the estimation of a stochastic matrix, reflecting these interaction probablities over the network.\\ 
The construction of the stochastic matrix can proceed in two different ways. In the earliest studies we used a construction which was done at the level of phenotypes \cite{Teschendorff2010bmc,West2012}. Under this model, edge weights $w_{ij}$ between proteins $i$ and $j$ were constructed from the correlation between the expression levels of the corresponding genes $i$ and $j$, as assessed over independent samples all representing the same phenotype. Estimating the correlations over independent samples, all within the same phenotype, can be viewed as representing a multifactorial perturbation experiment, with e.g. genetic differences between individuals mimicking specific perturbations, and thus allowing influences to be inferred. Thus, this approach hinges on the assumption that highly correlated genes, whose coding proteins interact, are more likely to be interacting in the given phenotype than two genes which do not correlate. The use of a PPI as a scaffold is important to filter out significant correlations which only result from indirect influences. The correlations themselves can be defined in many different ways, for instance, using simple Pearson correlations or non-linear measures such as Mutual Information. For example, one way to define the weights is as $w_{ij}=\frac{1}{2}(1+c_{ij})$ with $c_{ij}$ describing the Pearson correlation coefficient between genes $i$ and $j$. This definition guarantees positivity, but also treats positive and negative correlations differently, which makes biological sense because activating and inhibitory interactions normally have completely distinct consequences on downstream signalling. Thus, the above weight scheme treats zero or insignificant correlations as intermediate, consistent with the view that an absent interaction is neither inhibitory nor activating. However, other choices of weights are possible: e.g. $w_{ij}=|c_{ij}|$, which treats negative and positive correlations on an equal footing. Once edge weights are defined as above, these are then normalised to define the stochastic matrix $p_{ij}$ over the network, $$p_{ij}=\frac{w_{ij}}{\sum_{k \in \mathcal{N}_{i}}w_{ik}},$$ with $\mathcal{N}_i$ denoting the PPI neighbors of gene $i$. Thus, $p_{ij}$ is the probability of interaction between genes $i$ and $j$, and as required, $\sum_j{p_{ij}}=1$. However, there is no requirement for $p_{ij}$ to be doubly stochastic, i.e. $P$ is in general not a symmetric matrix. Hence, edges are bi-directional with different weights labeling the probability of signal transduction from $i$ to $j$, and that from $j$ to $i$ ($p_{ij}\neq p_{ji}$).\\
An alternative to the above construction of the stochastic matrix is to invoke the mass action principle, i.e. one now assumes that the probability of interaction in a given sample is proportional to the product of expression values of the corresponding genes in that sample \cite{Banerji2013}. Thus, the PPI is again used as a scaffold to only allow interactions supported by the PPI network, but the weights are defined using the mass action principle, as $$w_{ij}\propto E_iE_j$$ where $E_i$ denotes the normalised expression intensity value (or normalised RNA-Seq read count) of gene $i$. An important advantage of this construction is that the stochastic matrix is now sample specific, as the expression values are unique to each sample.\\
Given a stochastic matrix, $p_{ij}$, constructed using one of the two methods above, one can now define a local Shannon entropy for each gene $i$ as $$\tilde{S}_{i}=-\frac{1}{\log{k_{i}}}\sum_{k \in \mathcal{N}_{i}}p_{ik}\log{p_{ik}},$$ where $k_i$ denotes the degree of gene $i$ in the PPI network. The normalisation is optional but ensures that this local Shannon entropy is normalised between 0 and 1. Clearly, if only one weight is non-zero, then the entropy attains its minimal value (0), representing a state of determinism or lowest uncertainty. Conversely, if all edges emanating from $i$ carry the same weight, the entropy is maximal (1), representing the case of highly promiscuous signaling. In principle, local Shannon entropies can thus be compared between phenotypes to identify genes where there are changes in the uncertainty of signaling. In the case where entropies are estimated at the phenotype level, jackknife approaches can be used to derive corresponding differential entropy statistics \cite{West2012}. Deriving statistics is important because node degree has a dramatic influence on the entropy variance, with high degree nodes exhibiting significantly lower variability in absolute terms, which nevertheless could be highly significant \cite{West2012}. In the case where entropies are estimated at the level of individual samples, ordinary statistical tests (e.g. rank sum tests) can be used to derive sensible P-values, assuming of course that enough samples exist within the phenotypes being compared.\\
In addition to the local entropy, it is also of interest to consider statistical properties of the distribution of local entropies, for instance their average. Comparing the average of local entropies between two phenotypes would correspond to a comparison of non-equilibrium entropy rates. To see this, consider the formal definition of the entropy rate $SR$ \cite{Latora1999,GomezGardenes2008},  i.e. $$SR=\sum_{i=1}^{n} \pi_{i} S_{i},$$ where $\pi_{i}$ is the stationary distribution (equivalently the left eigenvector with unit eigenvalue) of $P$ (i.e. $\pi P=\pi$), and where now $$S_i=-\sum_{k \in \mathcal{N}_{i}}p_{ik}\log{p_{ik}}.$$ Note that the entropy rate $SR$ is an equilibrium entropy since it involves the stationary distribution of the random walker. As such, the entropy rate also depends on the global topology of the network. Thus, the entropy rate is a weighted average of the local unnormalized entropies, with the weights specified by the stationary distribution. It follows that comparing the unweighted averages of local entropies reflects a comparison of a non-equilibrium entropy rate since the stationary distribution is never used.

\begin{figure}[ht]
\begin{center}
\includegraphics[scale=0.6]{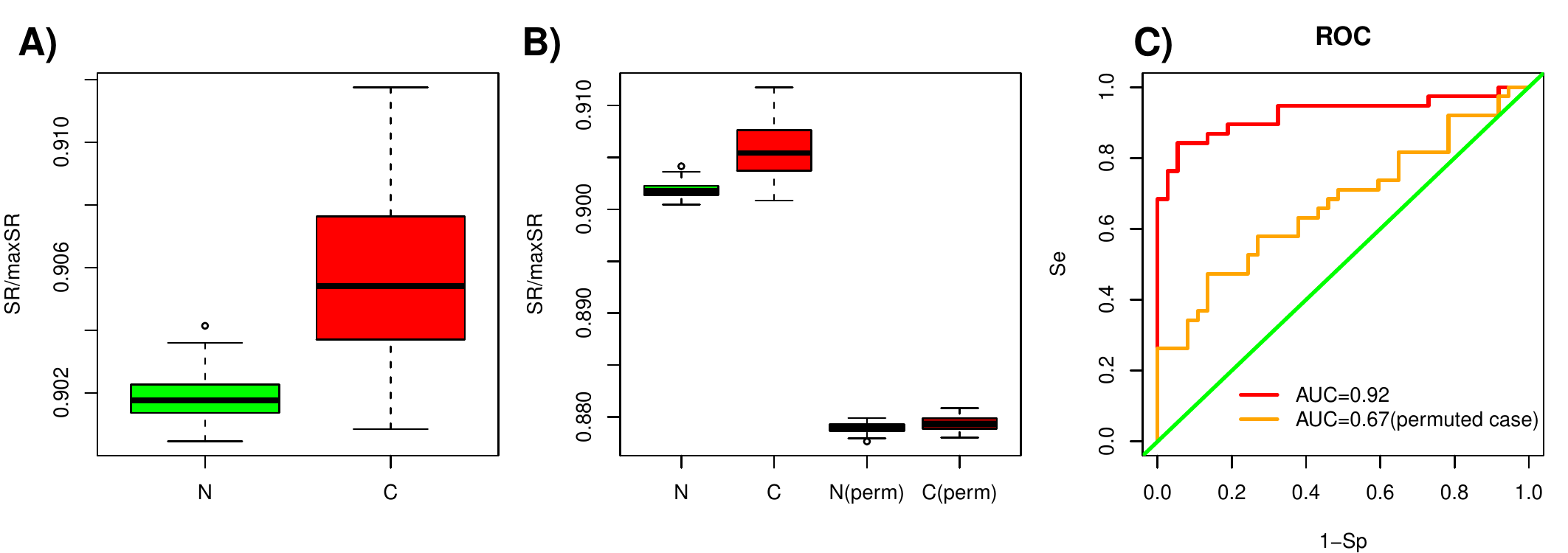}
%
%
\caption{{\bf Entropy rate in normal and cancer tissue:} {\bf A)} Boxplots of sample specific entropy rates comparing normal liver and liver cancer samples. Expression data set is the one used in \cite{Banerji2013}. {\bf B)} As A), but also shown are the sample specific entropy rates obtained by randomly permuting the gene expression values over the network. Note how the entropy rates for the normal and cancer states are significantly reduced upon permutation and are no longer highly discriminative between normal and cancer. {\bf C)} ROC curves and associated AUC normal-cancer discriminatory values for the unpermuted and permutated cases depicted in A) and B).
}
\label{fig:3}       
\end{center}
\end{figure}

\subsection{The importance of the integrated weighted network in estimating signalling entropy}
The entropy rate constructed using the mass action principle is sample specific. We previously demonstrated that this entropy rate was highly discriminative of the differentiation potential of samples within a developmental lineage, as well as being highly discriminative of normal and cancer tissue \cite{Banerji2013} ({\bf Fig.3A}). Since the entropy rate takes as input a PPI network and a sample's genome-wide expression profile, the significance of the resulting entropy values, as well as that of the difference between entropy rates, also needs to be assessed relative to a null distribution in which the putative information content between network and gene expression values is non-existent. In fact, since the weights in the network determine the entropy rate and these weights are dependent on both the specific network nodes and their respective gene expression profiles, it is natural to assess the significance of the entropy rate by ``destroying'' the mutual information between the network nodes and their gene expression profiles, for instance by randomising (i.e. permuting) the gene expression profiles over the network. Thus, under this randomisation, the topological properties of the network remain fixed, but the weights are redefined. Application of this randomisation procedure to the normal/cancer expression set considered previously \cite{Banerji2013} ({\bf Fig.3A}) shows that the discriminatory potential is significantly reduced upon permuting the gene expression values over the network ({\bf Fig.3B-C}). Importantly, we observe that the entropy rate is much higher in the normal and cancer states compared to the rates obtained upon randomisation of the gene expression profiles ({\bf Fig.3B}), indicating that both normal and cancer states are characterised by a higher level of signaling promiscuity compared to a network with random weights. That the discrimination between normal and cancer is significantly reduced in the randomly weighted network further demonstrates that there is substantial mutual information between the PPI network and the gene expression profiles, thus justifying the signalling entropy approach.

\subsection{Signalling entropy R-package: input, output and code availability}
A vignette/manual and user-friendly R-scripts that allow computation of the entropy rate is available at the following url: {\it sourceforge.net/projects/signalentropy/files/}. Here we briefly describe the salient aspects of this package:\\
{\it The input:} The main R-script provided ({\it CompSR}) takes as input a user-specified PPI network, and a genome-wide expression vector representing the gene expression profile of a sample. It is assumed that this has been generated using either Affy, Illumina or RNA-Sequencing. In principle one ought to use as gene expression value the normalised unlogged intensity (Affy/Illu) or RNA-seq count, since this is what should be proportional to the number of RNA transcripts in the sample. However, in practice we advise taking the log-transformed normalised value since the log-transformation provides a better compromise between proportionality and regularisation, i.e. some regularisation is advisable since the underlying kinetic reactions are also regular.\\
{\it The output:} The R-functions provided in the package then allow the user to estimate the global entropy rate for a given sample, as well as the local normalised entropies for each gene in the integrated network. If a phenotype is specified then genes can be ranked according to the correlation strength of their local entropy to the phenotype. Thus, the signalling entropy method allows us to assess (i) if the overall levels of signalling promiscuity is different between phenotypes, which could be important, for instance, to compare the pluripotent capacity of iPSCs generated via different protocols or to assess the stem cell nature of putative cancer stem cells \cite{Banerji2013}, and (ii) to rank genes according to differential entropy between phenotypes, allowing key signalling genes associated with differentiation, metastasis or cancer to be identified \cite{Teschendorff2010bmc,West2012,Banerji2013}.

\section{Results}
\subsection{Signalling entropy and cellular robustness}
Our previous observation that signalling entropy is increased in cancer \cite{West2012}, and that one of the key characteristics of cancer cells is their robustness to intervention and environmental stresses, suggested to us that high cellular signalling entropy may be a defining feature of a cell that is robust to general perturbations. In addition, cancer cells often exhibit the phenomenon of oncogene addiction, whereby they become overly reliant on the activation of a specific oncogenic pathway, rendering them less robust to targeted intervention \cite{Hanahan2011}. Since oncogenes are normally overexpressed in cancer, it follows by the second cancer system-omic hallmark \cite{West2012}, that their lower signalling entropy may underpin their increased sensitivity to targeted drugs ({\bf Fig.2}). Based on these insights, we posited that cellular signalling entropy may be correlated to the overall cellular system's robustness to perturbations.\\
In order to explore the relation between signalling entropy and robustness in a general context, it is useful to consider another network property of the stochastic matrix, namely the global mixing rate. This mixing rate is defined formally as the inverse of the number of timesteps for which one has to evolve a random walk on the graph so that its probability distribution is close to the stationary distribution, independently of the starting node ({\bf Appendix B}). This is a notion that is closer to that of robustness or resilience as defined by Demetrius and Manke \cite{Demetrius2004,Demetrius2005,Manke2006}, allowing the mixing rate to be viewed as a crude proxy of a system's overall robustness to generic perturbations. Furthermore, the global mixing rate can be easily estimated as
\begin{equation}
\mu_R=-\log{\mathrm{SLEM}}
\end{equation}
where $SLEM$ is the second largest (right) eigenvalue modulus of the stochastic matrix ({\bf Appendix B}). Thus, we first asked how this global mixing rate is related to the signalling entropy rate.\\

\begin{figure}[ht]
\begin{center}
\includegraphics[scale=0.75]{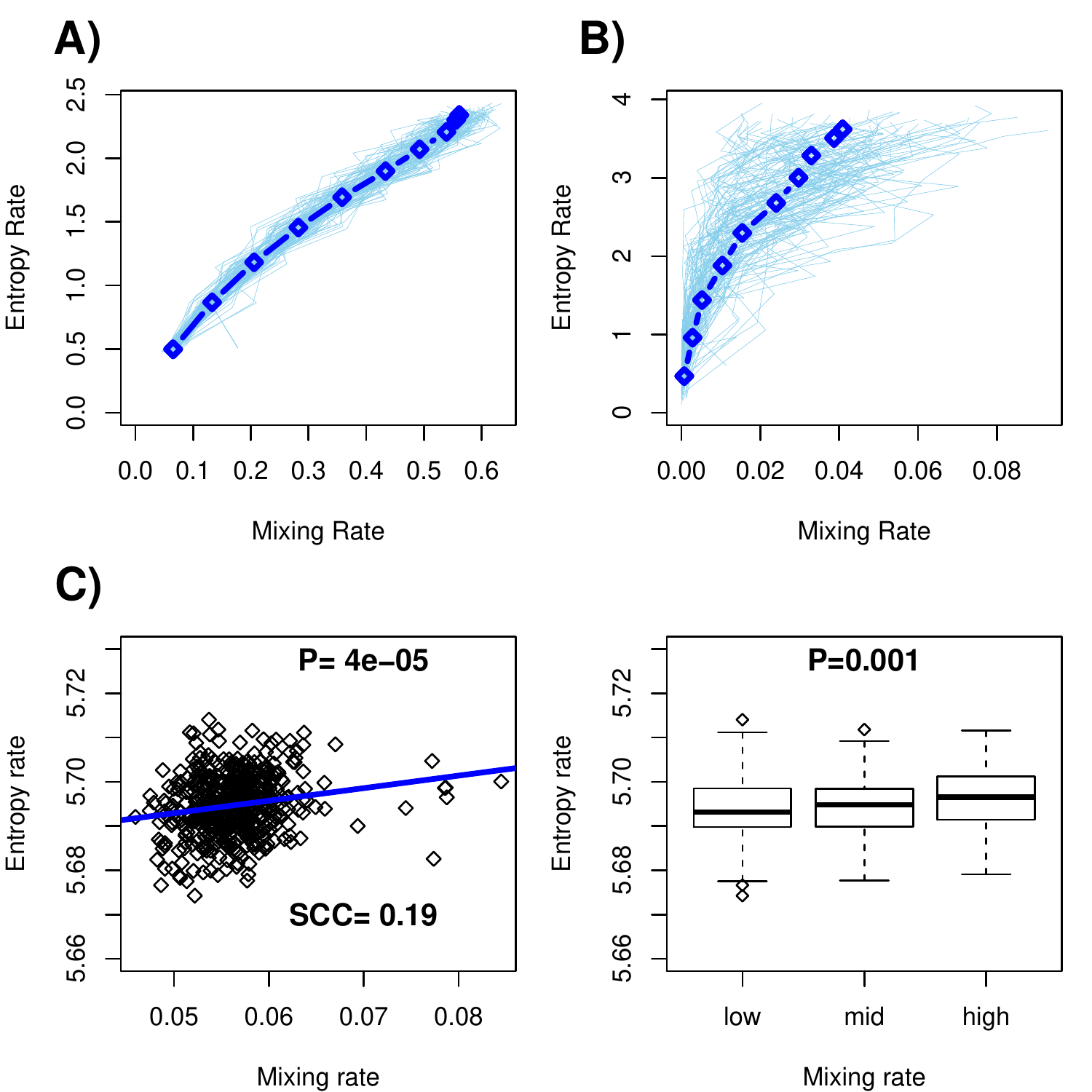}
%
%
\caption{{\bf Correlation between global entropy and mixing rates.} {\bf A)} Plotted is the entropy rate against the mixing rate for Erdos-Renyi graphs of 100 nodes and average degree 10. The light blue lines show the results over 100 different network realisations, with the dark blue line representing the ensemble average. {\bf B)} As A) but for connected subnetworks of average size 500 nodes, generated by random subsampling of 1000 nodes from the full PPI network of 8038 nodes. As in A), a range of edge weight distributions were considered reflecting variable entropy rates. The light blue lines show the results over 100 different realisations, with the dark blue line representing the ensemble average. {\bf C)} Scatterplot of the global entropy versus mixing rates for the 488 Cancer Cell-Line Encyclopedia (CCLE) samples. Spearman rank correlation coefficient (SCC) and associated P-value are given (left panel). Wilcoxon rank sum test P-value between high and low mixing rate groups (as defined by tertiles) (right panel).
}
\label{fig:4}       
\end{center}
\end{figure}

For a regular network of degree $d$ it can be easily shown that the entropy rate $SR=\log{d}$, whilst results on graph theory also show that for sufficiently large regular graphs, $\mu_R\propto \log{d}$ \cite{Friedman1991}. Hence, at least for regular networks a direct correlation exists. It follows that for non-regular graphs with tight degree distributions, e.g. Erd\"os-Renyi (ER) graphs, the entropy and mixing rates should also be approximately correlated. Indeed, using an empirical computational approach to evaluate the entropy and mixing rates for ER graphs with variable entropy rates, we were able to confirm this correlation ({\bf Appendix C, Fig.4A}). Next, we wanted to investigate if this relationship also holds for networks with more realistic topologies than ER graphs. Hence, we generated connected networks on the order of 500 nodes by random subsampling 1000 nodes from our large protein interaction network ($\sim 8000$ nodes) followed by extraction of the maximally connected component ({\bf Appendix D}). We verified that these networks possessed approximate scale-free topologies with clustering coefficients which were also significantly higher than for ER graphs. As before, for each generated network, stochastic matrices of variable entropy rates were defined. Signalling entropy and mixing rates were then estimated for each of these networks, and subsequently averaged over an ensemble of such graphs. As with the random Poisson (ER) graphs, average mixing and entropy rates were highly correlated ({\bf Fig.4B}).\\
Having demonstrated a direct correlation between these two measures on simulated data, we next asked if such a correlation could also be present in the full protein interaction networks and with the stochastic matrices derived from real expression data. Thus, we computed the global entropy and mixing rates for 488 cancer cell lines from the Cancer Cell Line Encylopedia (CCLE) ({\bf Appendix E}) \cite{Barretina2012}. Remarkably, despite the dynamic ranges of both entropy and mixing rates being much smaller ({\bf Fig.4C}) compared to those of the artificially induced stochastic matrices (c.f {\bf Fig.4A-B}), we were still able to detect a statistically significant correlation between entropy and mixing rates, as estimated across the 488 cell lines ({\bf Fig.4C}). Thus, all these results support the view that global entropy and mixing rates are correlated, albeit only in an average/ensemble sense.

\subsection{Local signalling entropy predicts drug sensitivity}
In the case of realistic expression and PPI data, the observed correlation between entropy and mixing rates was statistically significant but weak ({\bf Fig.4C}). This could be easily attributed to the fact that in real biological networks, the global mixing rate is a very poor measure of cellular robustness. In fact, it is well known that cellular robustness is highly sensitive to which genes undergo the perturbation. For instance, in mice some genes are embryonically lethal, whereas others are not \cite{Barabasi2004}. Robustness itself also admits many different definitions. Because of this, we decided to investigate signalling entropy in relation to other more objective measures of cellular robustness. One such measure is drug sensitivity, for instance, IC50 values, which measure the amount of drug dose required to inhibit cell proliferation by 50$\%$. According to this measure, a cell that is insensitive to drug treatment is highly robust to that particular treatment. Since most drugs target specific genes, we decided to explore the relation, if any, between the local signalling entropy of drug targets and their associated drug sensitivity measures. Specifically, we hypothesized that since local entropy provides a proxy for local pathway redundancy, that it would correlate with drug resistance. To test this, we first computed for each of the 8038 genes in the PPI network its local signalling entropy in each of the 488 CCLE cancer cell-lines. To identify associations between the 24 drug sensitivity profiles and the 8038 local network entropies, we computed non-linear rank (Spearman) correlation coefficients across the 488 cancer cell-lines, resulting in $24\times 8038$ correlations and associated P-values. We observed that there were many more significant associations than could be accounted for by random chance ({\bf Fig.5A}), with the overall strength of association very similar to that seen between gene expression and drug sensitivity ({\bf Fig.5B}).\\

\begin{figure}[ht]
\begin{center}
\includegraphics[scale=0.6]{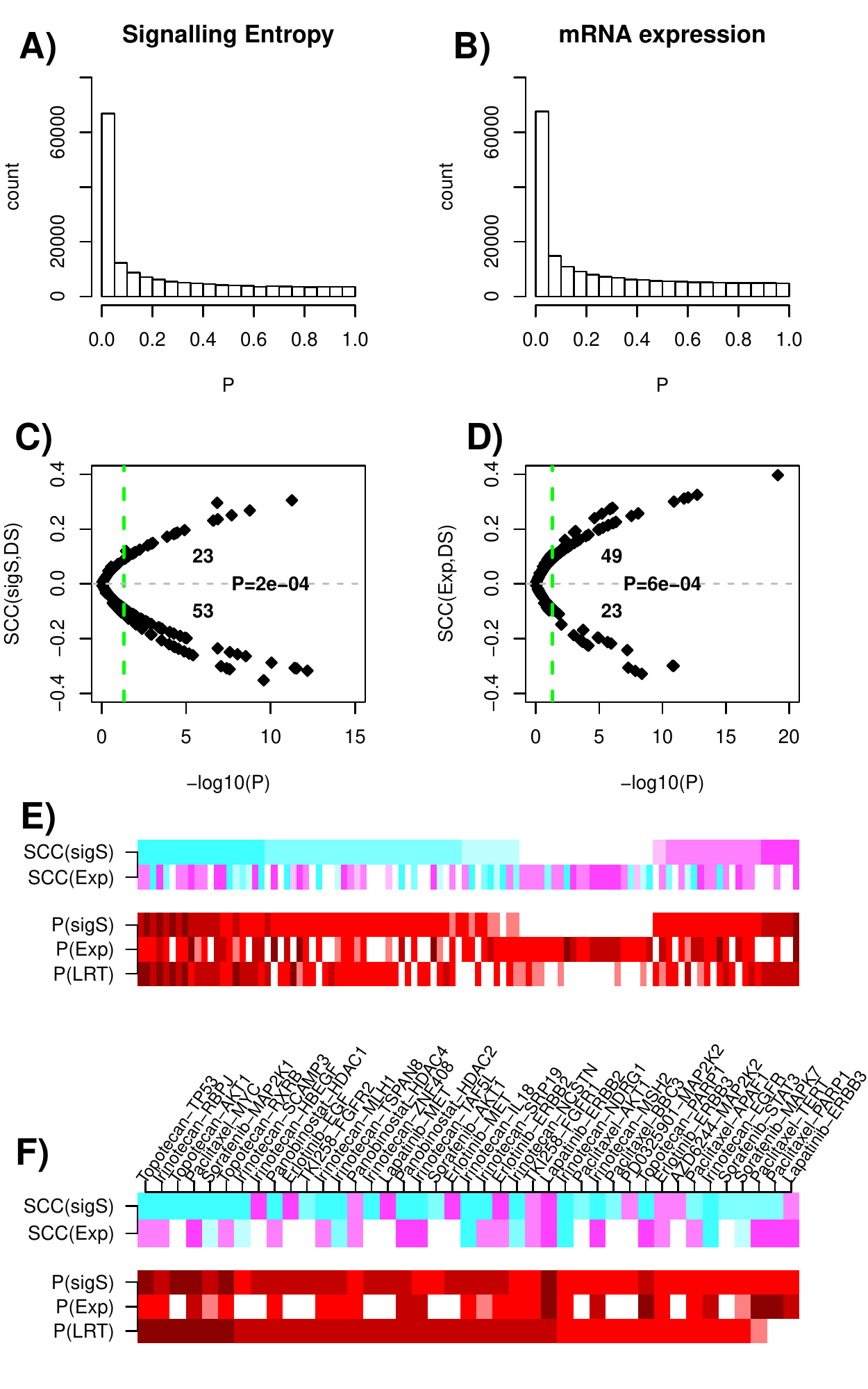}
%
%
\caption{{\bf Anti-correlation between local signalling entropy and drug sensitivity.} {\bf A)} Histogram of Spearman rank correlation P-values between drug sensitivities (n=24) and local signalling entropies (n=8038 genes), as computed over the 488 CCLE cell-lines. {\bf B)} As A) but for gene expression instead of signalling entropy. {\bf C)} Scatterplot of Spearman rank Correlation Coefficient (SCC) between local signalling entropy (sigS) and drug sensitivity (DS) against $-log_{10}$P-value for each of 134 drug gene target pairs. {\bf D)} As C) but for gene expression instead of local entropy. In {\bf C) \& D)}, we give the distribution of significant positive and negative correlations and the associated Binomial test P-value. {\bf E)} Drug target gene pairs ranked according to negative SCC (cyan color) between signalling entropy and drug sensitivity. Only pairs where at least one of entropy or gene expression were significantly associated are shown. Upper panels show the SCC (cyan=strong negative SCC, white=zero or non-significant SCC, magenta=strong positive SCC), while lower panels show the corresponding P-values with the last row showing the P-value from the Likelihood Ratio Test (LRT) assessing the added predictive value of signalling entropy over gene expression. The darker the tones of red the more significant the P-values are, whilst white indicates non-significance. {\bf F)} A subset of E), with pairs now ranked according to the LRT P-value.
}
\label{fig:5}       
\end{center}
\end{figure}

One would expect the targets of specific drugs to be highly informative of the sensitivity of the corresponding drugs. We used the CancerResource \cite{Ahmed2011} to identify targets of the 24 drugs considered here, and found a total of 154 drug-target pairs. For 134 of these pairs we could compute a P-value of association between local entropy and drug sensitivity with 76 pairs (i.e. 57$\%$) exhibiting a significant association ({\bf Fig.5C}). This was similar to the proportion (54$\%$) of observed significant associations between gene expression and drug sensitivity ({\bf Fig.5D}). However, interestingly, only 42 of these significant drug-target pairs were in common between the 76 obtained using signalling entropy and the 72 obtained using gene expression. Importantly, while the significant associations between gene expression and drug sensitivity involved preferentially positive correlations, in the case of signalling entropy most of the significant correlations were negative ({\bf Fig.5C-D}), exactly in line with our hypothesis that high entropy implies drug resistance. Thus, as expected, cell-lines with highly expressed drug targets were more sensitive to treatment by the corresponding drug, but equally, drug targets exhibiting a relatively low signalling entropy were also predictive of drug sensitivity.\\
To formally demonstrate that local signalling entropy adds predictive power over gene expression, we considered bi-variate regression models including both the target's gene expression as well as the target's signalling entropy. Using such bivariate models and likelihood ratio tests we found that in the majority of cases where signalling entropy was significantly associated with drug sensitivity that it did so independently of gene expression, adding predictive value ({\bf Fig.5E}). Top ranked drug-target pairs where signalling entropy added most predictive value over gene expression included Topotecan/{\it TP53} and Paclitaxel/{\it MYC} ({\bf Fig.5F}).\\
To further demonstrate that the observed associations between local signalling entropy and drug sensitivity are statistically meaningful, we conducted a control test, in which we replaced in the multivariate model the signalling entropy of the target with a non-local entropy computed by randomly replacing the PPI neighbours of the target with other ``far-away'' genes in the network. For a considerable fraction (41$\%$) of drug-target pairs, the original multivariate models including the local entropy constituted better predictive models than those with the non-local entropy (false discovery rate $<0.25$), indicating that the observed associations are driven, at least partly, by the network structure.

\subsection{High signalling entropy of intra-cellular signaling hubs is a hallmark of drug resistance}

\begin{figure}[ht]
\begin{center}
\includegraphics[scale=0.6]{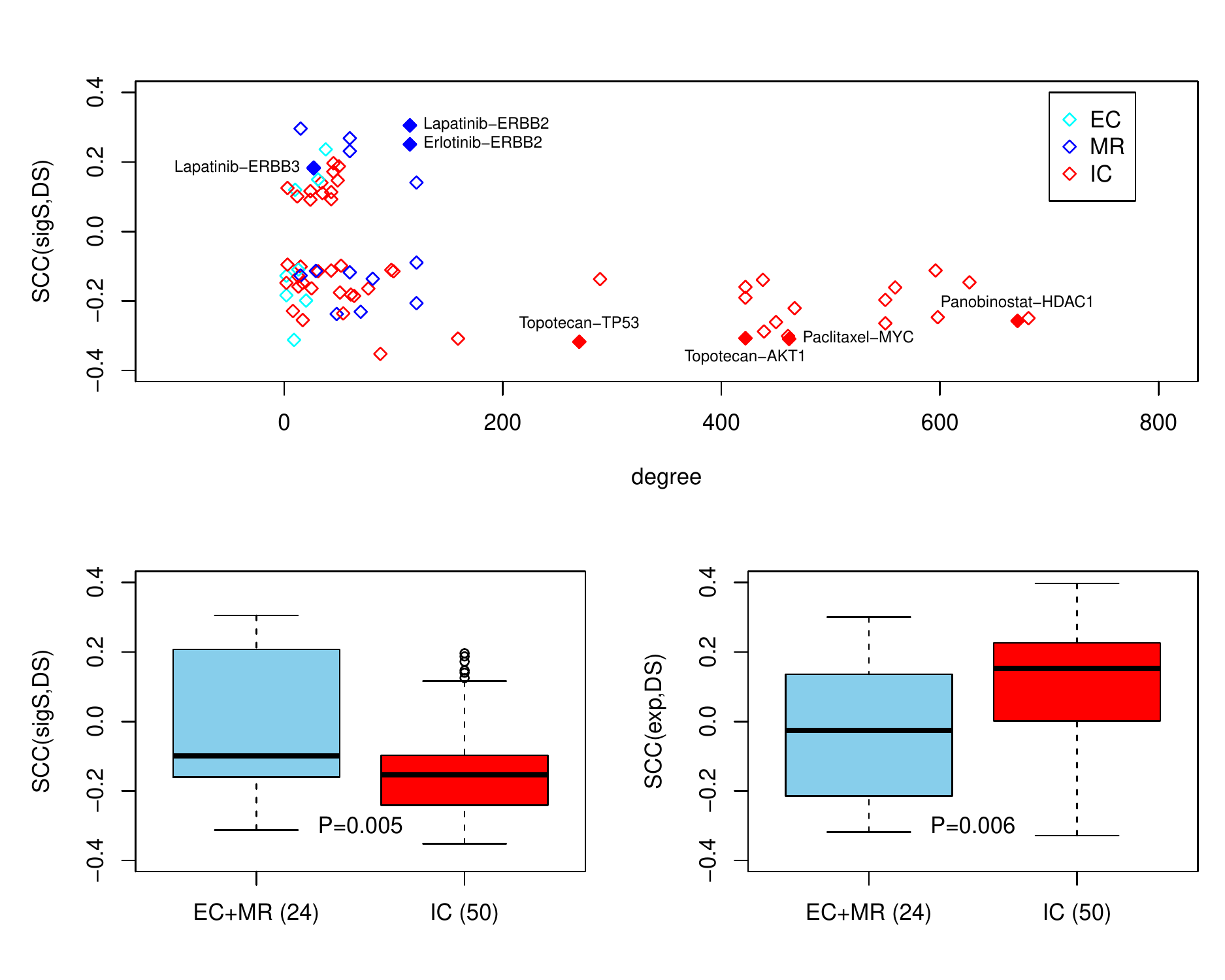}
%
%
\caption{{\bf High signalling entropy of intra-cellular hubs confers drug resistance:} Upper panel plots the topological degree of drug targets (x-axis) against the Spearman rank Correlation Coefficient (SCC) between its local signalling entropy and drug sensitivity, as assessed over the CCLE samples. EC=target annotated as extra-cellular, MR=target annotated as membrane receptor, IC=target annotated as intra-cellular. Left lower panel shows the difference in SCC values between the IC and EC+MR targets. Wilcoxon rank sum test P-value given. Right lower panel shows the difference in SCC values between the IC and EC+MR targets, where now the SCC value were computed between gene expression and drug sensitivity. Wilcoxon rank sum test P-value given.
}
\label{fig:6}       
\end{center}
\end{figure}

Among the drug-target pairs for which signalling entropy was a significant predictor of drug sensitivity, we observed a striking non-linear association with the topological degree of the targets in the network ({\bf Fig.6A}). In fact, for hubs in the network, most of which encode nodes located in the intracellular signaling hierarchy, high signalling entropy was exclusively associated with drug resistance (negative SCC). Examples included well-known intracellular signalling hubs like {\it HDAC1, HDAC2, AKT1, TP53, STAT3, MYC}). Some intracellular non-hubs (e.g. {\it CASP9, BCL2L1, BIRC3}) also exhibited negative correlations between signalling entropy and drug sensitivity. Among targets for which high signalling entropy predicted drug sensitivity, we observed several membrane receptors (e.g {\it ERBB2, ERBB3, EGFR, MET}) and growth factors (e.g {\it HBEGF, EGF, TGFA}). Given that the correlation coefficients were estimated across samples (cell-lines) and that the underlying network topology is unchanged between samples, the observed non-linear relation between the directionality of the correlation and node degree is a highly non-trivial finding. We also observed a clear dependence on the main signaling domain of the target, with intracellular hubs showing preferential anti-correlations, in contrast to growth factors and membrane receptors which exhibited positive and negative correlations in equal proportion ({\bf Fig.6B}). Thus, we can conclude from these analyses that cancer associated changes to the interaction patterns of intra-cellular hubs are key determinants of drug resistance. In particular, in cancers where the local entropy at these hubs is increased, as a result of increased promiscuous signaling, drugs targeting these hubs are less likely to be effective.

\section{Discussion and Conclusions}
Here we have advocated a fairly novel methodological framework, based on the notion of signalling entropy, to help analyze and interpret functional omic data sets. The method uses a structural network, i.e. a PPI network, from the outset, and integrates this with gene expression data, using local and global signalling entropy measures to estimate the amount of uncertainty in the network signaling patterns. We made the case as to why uncertainty or entropy might be a key concept to consider when analysing biological data. In previous work \cite{West2012,Banerji2013}, we showed how signalling entropy can be used to estimate the differentiation potential of a cellular sample in the context of normal differentiation processes, as well as demonstrating that signalling entropy also provides a highly accurate discriminator of cancer phenotypes.\\
In this study we focused on a novel application of signalling entropy to understanding cellular robustness in the context of cancer drug sensitivity screens. Our main findings are that (i) local signalling entropy measures add predictive value over models that only use gene expression, (ii) that the local entropy of drug targets generally correlates positively with drug resistance, and (iii) that increased local entropy of intra-cellular hubs in cancer cells is a key hallmark of drug resistance.\\
These results are consistent and suggestive of an underlying entropy-robustness correlation theorem, as envisaged by previous authors \cite{Demetrius2004}. Here, we provided additional empirical justification for such a theorem, using both simulated as well as real data, and using drug sensitivity measures as proxies for local robustness measures. A more detailed theoretical analysis of local mixing and entropy rates and incorporation of additional information (e.g. phosphorylation states of kinases, protein expression,..etc) when estimating entropies on real data, will undoubtedly lead to further improvements in our systems-level understanding of how entropy/uncertainty dictates cellular phenotypes. From a practical perspective, we have already shown in previous work \cite{Banerji2013} how local network entropies could be used to identify key signaling pathways in differentiation. It will therefore be interesting in future to apply the signalling entropy framework in more detail to identify specific signaling nodes/pathways underlying drug sensitivity/resistance.\\
Our results have also confirmed the importance of network topology (e.g. hubness and therefore scale-freeness) in dictating drug resistance patterns. Thus, it will be interesting to continue to explore the deep relation between topological features such as scale-freeness and hierarchical modularity in relation to the gene expression patterns seen in normal and disease physiology. It is entirely plausible that, although our current data and network models are only mere caricatures of the real biological networks, that underlying fundamental systems biology principles characterising cellular phenotypes can still be gleaned from this data. Indeed, our discovery that increased signalling entropy correlates with drug resistance demonstrates that such fundamental principles can already be inferred from existing data resources.\\
It will also be of interest to consider other potential applications of the signaling entropy method. For instance, one application could be to the identification of functional driver aberrations in cancer. This would first use epigenomic (e.g. DNA methylation and histone modification profiles) and genomic information (SNPs, CNVs) together with matched gene or protein expression data to identify functional epigenetic/genetic aberrations in cancer. Signalling entropy would subsequently be used as a means of identifying those aberrations which also cause a fundamental rewiring of the network. With multi-dimensional matched omic data readily available from the TCGA/ICGC, this represents another potentially important application of the signalling entropy method. Another important future application of the signaling entropy method would be to single-cell data, which is poised to become ever more important  \cite{Speicher2013}. So far, all signaling entropy analyses have been performed on cell populations, yet single-cell analysis will be necessary to disentangle the entropies at the single-cell and population-cell levels.\\
In summary, we envisage that signalling entropy will become a key concept in future analyses and interpretation of biological data.

\section*{Acknowledgements}
AET is supported by the Chinese Academy of Sciences, Shanghai Institute for Biological Sciences and the Max-Planck Gesellshaft. RK and PS acknowledge funding under FP7/2007-2013/grant agreement nr. 290038.

\appendix

\section{The protein protein interaction (PPI) network}
We downloaded the complete human protein interaction network from Pathway Commons ({\it www.pathwaycommons.org}) (Jun.2012) \cite{Cerami2011}, which brings together protein interactions from several distinct sources. We built a protein protein interaction (PPI) network from integrating the following sources: the Human Protein Reference Database (HPRD) \cite{Prasad2009}, the National Cancer Institute Nature Pathway Interaction Database (NCI-PID) ({\it pid.nci.nih.gov}), the Interactome (Intact) {\it http://www.ebi.ac.uk/intact/} and the Molecular Interaction Database (MINT) {\it http://mint.bio.uniroma2.it/mint/}. Protein interactions in this network include physical stable interactions such as those defining protein complexes, as well as transient interactions such as post-translational modifications and enzymatic reactions found in signal transduction pathways, including 20 highly curated immune and cancer signaling pathways from NetPath ({\it www.netpath.org}) \cite{Kandasamy2010}. We focused on non-redundant interactions, only included nodes with an Entrez gene ID annotation and focused on the maximally conntected component, resulting in a connected network of 10,720 nodes (unique Entrez IDs) and 152,889 documented interactions.

\section{Mixing rate results for a general random walk on a connected graph}
Suppose that we have a connected graph with an ergodic Markov chain defined on it, given by a stochastic matrix $P$ with stationary distribution $\pi$ (i.e. $\pi P=\pi$). We further assume that the detailed balance equation, $\pi_ip_{ij}=\pi_jp_{ji}$ holds. Defining the diagonal matrix $\Pi\equiv\mathrm{diag}(\pi_1,...,\pi_N)$, the detailed balance equation can be rewritten as $\Pi P=P^T\Pi$. That the matrix $P$ is stochastic means that each row of $P$ sums to 1. Equivalently, the vector with all unit entries, 1, is a right eigenvector of $P$, i.e. $P1=1$. Note also that the stationary distribution $\pi$ is a left eigenvector of $P$ with eigenvalue 1. The Perron-Frobenius theorem further implies that all other eigenvalues are less than 1 in magnitude. If detailed balance holds, all eigenvalues are also real.\\
The global mixing rate can be defined by considering the rate at which the node visitation probabilities of a random walker approaches that of the stationary distribution, independently of the starting position. Formally, if we let $Q_i(t)$ denote the probability that at time $t$ we find the walker at node $i$, then the mixing rate, $\mu_R$, is defined by \cite{Lovasz1993} $$\mu_R=\lim_{t\rightarrow\infty}\sup\max_i{|Q_i(t)-\pi_i|^{1/t}}.$$ Denoting by $Q(t)$ the column vector with elements $Q_i(t)$, one can write $$Q(t)=(P^t)^TQ(0).$$ To determine the elements, $Q_i(t)$, of this vector, it is convenient to first introduce the matrix $M=\Pi^{\frac{1}{2}}P\Pi^{-\frac{1}{2}}$. This is because $M^t=\Pi^{\frac{1}{2}}P^t\Pi^{-\frac{1}{2}}$, and so $P^t$ can be rewritten in terms of $M$, but also because $M$ satisfies the following lemma:
\newtheorem{bRW}{Lemma}
\begin{bRW}
$M$ has the same eigenvalues as $P$ and if $u_a$ is an eigenvector of $M$, then $r_a=\Pi^{-\frac{1}{2}}u_a$ and $l_a=\Pi^{\frac{1}{2}}u_a$ are right and left eigenvectors of $P$.
\end{bRW}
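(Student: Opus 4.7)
The plan is to exploit the fact that $M$ is a similarity transform of $P$, together with the detailed balance assumption which will make $M$ symmetric. This symmetry is what makes the same eigenvector $u_a$ of $M$ produce both a right and a left eigenvector of $P$.

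First I would write $M = S P S^{-1}$ with $S = \Pi^{1/2}$, which is well-defined and invertible since connectedness and ergodicity force $\pi_i > 0$ for all $i$. Similarity immediately gives $\det(M - \lambda I) = \det(P - \lambda I)$, so $M$ and $P$ share their spectrum. Next I would substitute detailed balance, written as $\Pi P = P^T \Pi$, into $M^T = \Pi^{-1/2} P^T \Pi^{1/2}$ to obtain $M^T = \Pi^{-1/2}(\Pi P \Pi^{-1})\Pi^{1/2} = \Pi^{1/2} P \Pi^{-1/2} = M$. Thus $M$ is symmetric, so any eigenvector $u_a$ of $M$ is simultaneously a right and left eigenvector of $M$ for the same real eigenvalue $\lambda_a$.

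For the right eigenvector claim, starting from $M u_a = \lambda_a u_a$ I would left-multiply by $\Pi^{-1/2}$ and insert $\Pi^{1/2}\Pi^{-1/2}=I$ in the appropriate place to get $P(\Pi^{-1/2} u_a) = \lambda_a (\Pi^{-1/2} u_a)$, which identifies $r_a = \Pi^{-1/2} u_a$ as a right eigenvector of $P$. For the left eigenvector claim, I would start from $u_a^T M = \lambda_a u_a^T$ (using symmetry of $M$) and right-multiply by $\Pi^{1/2}$, yielding $(u_a^T \Pi^{1/2}) P = \lambda_a (u_a^T \Pi^{1/2})$, so that $l_a = \Pi^{1/2} u_a$ is a left eigenvector of $P$.

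There is no substantive obstacle here; the only mild subtlety is making sure detailed balance is actually used, and it enters precisely at the step that identifies $M^T$ with $M$. Without detailed balance one still has the right-eigenvector half (pure similarity), but the left-eigenvector formula $l_a = \Pi^{1/2} u_a$ would fail because $M$ need not be symmetric and one would have to invoke the eigenvectors of $M^T$ separately. Flagging this dependence explicitly is probably the cleanest way to present the argument.
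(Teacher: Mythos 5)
Your proposal is correct and follows essentially the same route as the paper: the similarity $M=\Pi^{\frac{1}{2}}P\Pi^{-\frac{1}{2}}$ gives the shared spectrum and the right eigenvector, and detailed balance $\Pi P=P^T\Pi$ gives the left eigenvector. The only cosmetic difference is that you establish $M=M^T$ first and read off the left eigenvector from the symmetry of $M$, whereas the paper applies detailed balance in-line to rewrite $\Pi P\Pi^{-1}l_a=\lambda_a l_a$ as $P^Tl_a=\lambda_a l_a$ and then transposes; the paper invokes the same symmetry $M=M^T$ immediately after the lemma anyway.
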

\begin{proof}[Proof of Lemma 1]
Suppose that $Mu_a=\lambda_au_a$. It then follows that $$P(\Pi^{-\frac{1}{2}}u_a)=\lambda_a(\Pi^{-\frac{1}{2}}u_a).$$ In the case of the left-eigenvector, multiply  $Mu_a=\lambda_au_a$ from the left with $\Pi^{\frac{1}{2}}$. Then, $$\Pi P\Pi^{-1}(\Pi^{\frac{1}{2}}u_a)=\lambda_a(\Pi^{\frac{1}{2}}u_a).$$ Detailed balance implies that $\Pi P=P^T\Pi$, so $P^Tl_a=\lambda_al_a$. Taking the transpose of this implies that $l_a$ is indeed a left-eigenvector of $P$.
\end{proof}
The significance of the above lemma becomes clear in light of the detailed balance equation, which implies that $M=M^T$, and so $M$ and $M^t$ can be orthogonally diagonalized. In particular, we can express $Q_i(t)$ in terms of the eigenvalue decomposition of $M^t$, as $$Q_i(t)=\sum_a{q_a|\lambda_a|^t\tilde{u}_{ai}\pi_i^{1/2}}$$, where $q_a=\sum_{j}{\tilde{u}_{aj}\pi_j^{-1/2}Q_j(0)}$ and where $\tilde{u}_a$ is the $a$'th unit norm eigenvector of $M$. Since $\Pi^{\frac{1}{2}}1$ is the top unit norm eigenvector of $M$ (using previous lemma), which has an eigenvalue of 1, it follows that $q_1=1$ and hence that 
$$Q_i(t)=\pi_i+\sum_{a\geq 2}{q_a|\lambda_a|^t\tilde{u}_{ai}\pi_i^{1/2}}.$$ It follows that $$|Q_i(t)-\pi_i|=\sum_{a\geq 2}{q_a|\lambda_a|^t\tilde{u}_{ai}\pi_i^{1/2}}.$$ Since $1=|\lambda_1|\geq|\lambda_2|\geq|\lambda_3|\dots$, we can conclude that as $t\rightarrow\infty$, the rate at which $Q_i$ approaches the stationary value $\pi_i$ is determined by the modulus of the second largest eigenvalue (the Second Largest Eigenvalue Modulus-SLEM). The global mixing rate $\mu_R$ can thus be estimated as $$\mu_R\approx -\log{|\lambda_2|}=-\log{\mathrm{SLEM}}$$.

\section{Entropy and mixing rates in simulated weighted Erd\"os-Renyi graphs}
For large regular graphs of degree $d$, the mixing rate is proportional to $\log{d}$ \cite{Friedman1991} and thus directly proportional to the entropy rate ($SR=\log{d}$ for a regular graph of degree $d$). By extrapolation, we can thus reasonably assume that for any sufficiently large graph with a tight degree distribution, such as random Erdos-Renyi (ER) graphs, that the entropy and mixing rates will also be correlated, albeit perhaps only in an average ensemble sense. The analytical demonstration of this is beyond the scope of this work. Hence, we adopt a computational empirical approach to see if the entropy and mixing rates may indeed be correlated. In detail (and without loss of generality concerning the end result) we considered ER graphs of size 100 nodes and average degree 10 (other values for these parameters gave similar answers). We built an ensemble of 100 distinct ER graphs, and for each of these we constructed a family of weighted networks, parameterised by a parameter $\epsilon$ which controls the level of signalling entropy. Specifically, we gradually shifted the weight distribution around each node $i$ (with degree $d_i$), in such a way that $p_{ij}=\epsilon/d_i$ for $j\neq k$ and $p_{ij}=1-\frac{\epsilon}{d_i}(d_i-1)$ for $j=k$, with $0\leq \epsilon \leq 1$ and with $k$ labeling a randomly chosen neighbor of node $i$. Thus, $\epsilon$ is a parameter that directly controls the uncertainty in the information flow, i.e. the entropy rate, since for $\epsilon=1$ we have that $p_{ij}=A_{ij}/d_i$ ($A_{ij}$ is the symmetric adjacency matrix of the graph), whilst for $\epsilon=0$, $p_{ij}=\delta_{ik}$, i.e. the information flow from node $i$ can only proceed along one node (node $k$). Thus, one would expect the entropy rate to decrease as $\epsilon$ is decreased to zero. For each value of $\epsilon$ we thus have an ensemble of 100 ER-graphs, for each of which the entropy and mixing rates can be computed.  Finally, at each value of $\epsilon$ the entropy and mixing rates are averaged over the ensemble. 

\section{Entropy and mixing rates in simulated weighted subgraphs of a PPI network}
The analysis described above was performed also for maximally connected subnetworks generated from the underlying realistic PPI network described earlier. Specifically, we randomly subsampled 1000 nodes from the 8038 node PPI network, and extracted the maximally connected subnetwork, which resulted (on average) in a subnetwork of approximately 500 nodes. A family of stochastic matrices of variable entropy rates were constructed as explained above and for each resulting weighted network we estimated the entropy and mixing rates. Finally, ensemble averages over 100 different realisations were computed.

\section{The Cancer Cell Line Encyclopedia (CCLE) data}
We used the gene expression data and drug sensitivity profiles as provided in the previous publication \cite{Barretina2012}. Briefly, integration of the gene expression data with our PPI network resulted in a maximally connected component consisting of 8038 genes/proteins. There were 488 cell-lines with gene expression and drug sensitivity profiles for 24 drugs. As a measure of drug response we used the Activity Area \cite{Barretina2012} since this measure gave optimal results when correlating drug response to gene expression levels of well established drug targets.




\bibliographystyle{elsarticle-num}


\begin{thebibliography}{10}
\expandafter\ifx\csname url\endcsname\relax
  \def\url#1{\texttt{#1}}\fi
\expandafter\ifx\csname urlprefix\endcsname\relax\def\urlprefix{URL }\fi
\expandafter\ifx\csname href\endcsname\relax
  \def\href#1#2{#2} \def\path#1{#1}\fi

\bibitem{Hood2013}
L.~Hood, L.~Rowen, {{T}he human genome project: big science transforms biology
  and medicine}, Genome Med 5~(9) (2013) 79.

\bibitem{TCGAovc2011}
N.~TCGA, Integrated genomic analyses of ovarian carcinoma, Nature 474~(7353)
  (2011) 609--615.

\bibitem{Hanahan2011}
D.~Hanahan, R.~A. Weinberg, Hallmarks of cancer: the next generation, Cell
  144~(5) (2011) 646--674.

\bibitem{Barabasi2011}
A.~L. Barabasi, N.~Gulbahce, J.~Loscalzo, Network medicine: a network-based
  approach to human disease, Nat Rev Genet 12~(1) (2011) 56--68.

\bibitem{Creixell2012}
P.~Creixell, E.~M. Schoof, J.~T. Erler, R.~Linding, {{N}avigating cancer
  network attractors for tumor-specific therapy}, Nat. Biotechnol. 30~(9)
  (2012) 842--848.

\bibitem{Weng1999}
G.~Weng, U.~S. Bhalla, R.~Iyengar, {{C}omplexity in biological signaling
  systems}, Science 284~(5411) (1999) 92--96.

\bibitem{Barabasi2013}
A.~L. Barabasi, Network science, Philos Trans A Math Phys Eng Sci 371~(1987)
  (2013) 20120375.

\bibitem{Teschendorff2010bmc}
A.~E. Teschendorff, S.~Severini, Increased entropy of signal transduction in
  the cancer metastasis phenotype, BMC Syst Biol 4 (2010) 104.

\bibitem{West2012}
J.~West, G.~Bianconi, S.~Severini, A.~E. Teschendorff, Differential network
  entropy reveals cancer system hallmarks, Sci Rep 2 (2012) 802.

\bibitem{Barabasi2004}
A.~L. Barabasi, Z.~N. Oltvai, Network biology: understanding the cell's
  functional organization, Nat Rev Genet 5~(2) (2004) 101--113.

\bibitem{Robin2013}
X.~Robin, P.~Creixell, O.~Radetskaya, C.~Costa~Santini, J.~Longden, R.~Linding,
  {{P}ersonalized network-based treatments in oncology}, Clin. Pharmacol. Ther.

\bibitem{Erler2012}
J.~T. Erler, R.~Linding, {{N}etwork medicine strikes a blow against breast
  cancer}, Cell 149~(4) (2012) 731--733.

\bibitem{Lee2012}
M.~J. Lee, A.~S. Ye, A.~K. Gardino, A.~M. Heijink, P.~K. Sorger, G.~MacBeath,
  M.~B. Yaffe, {{S}equential application of anticancer drugs enhances cell
  death by rewiring apoptotic signaling networks}, Cell 149~(4) (2012)
  780--794.

\bibitem{Carro2010}
M.~S. Carro, W.~K. Lim, M.~J. Alvarez, R.~J. Bollo, X.~Zhao, E.~Y. Snyder,
  E.~P. Sulman, S.~L. Anne, F.~Doetsch, H.~Colman, A.~Lasorella, K.~Aldape,
  A.~Califano, A.~Iavarone, {{T}he transcriptional network for mesenchymal
  transformation of brain tumours}, Nature 463~(7279) (2010) 318--325.

\bibitem{Lefebvre2012}
C.~Lefebvre, G.~Rieckhof, A.~Califano, Reverse engineering human regulatory
  networks, Wiley Interdiscip Rev Syst Biol Med 4~(4) (2012) 311--25.

\bibitem{Segal2004}
E.~Segal, N.~Friedman, D.~Koller, A.~Regev, A module map showing conditional
  activity of expression modules in cancer, Nat Genet 36~(10) (2004)
  1090--1098.

\bibitem{Opgen-Rhein2007}
R.~Opgen-Rhein, K.~Strimmer, From correlation to causation networks: a simple
  approximate learning algorithm, BMC Syst Biol 1 (2007) 37.

\bibitem{Barzel2013}
B.~Barzel, A.~L. Barabasi, Network link prediction by global silencing of
  indirect correlations, Nat Biotechnol 31~(8) (2013) 720--5.

\bibitem{Feizi2013}
S.~Feizi, D.~Marbach, M.~Medard, M.~Kellis, Network deconvolution as a general
  method to distinguish direct dependencies in, Nat Biotechnol 31~(8) (2013)
  726--33.

\bibitem{Chuang2007}
H.~Y. Chuang, E.~Lee, Y.~T. Liu, D.~Lee, T.~Ideker, Network-based
  classification of breast cancer metastasis, Mol Syst Biol 3 (2007) 140.

\bibitem{Dutkowski2011}
J.~Dutkowski, T.~Ideker, Protein networks as logic functions in development and
  cancer, PLoS Comput Biol 7~(9) (2011) e1002180.

\bibitem{Mitra2013}
K.~Mitra, A.~R. Carvunis, S.~K. Ramesh, T.~Ideker, Integrative approaches for
  finding modular structure in biological networks, Nat Rev Genet 14~(10)
  (2013) 719--32.

\bibitem{Prasad2009}
T.~S. Prasad, K.~Kandasamy, A.~Pandey, Human protein reference database and
  human proteinpedia as discovery tools for systems biology, Methods Mol Biol
  577 (2009) 67--79.

\bibitem{Cerami2011}
E.~G. Cerami, B.~E. Gross, E.~Demir, I.~Rodchenkov, O.~Babur, N.~Anwar,
  N.~Schultz, G.~D. Bader, C.~Sander, Pathway commons, a web resource for
  biological pathway data, Nucleic Acids Res 39~(Database) (2011) D685--D690.

\bibitem{Vidal2011}
M.~Vidal, M.~E. Cusick, A.~L. Barabasi, Interactome networks and human disease,
  Cell 144~(6) (2011) 986--998.

\bibitem{Ulitsky2007}
I.~Ulitsky, R.~Shamir, Identification of functional modules using network
  topology and high-throughput data, BMC Syst Biol 1 (2007) 8.

\bibitem{Taylor2009}
I.~W. Taylor, R.~Linding, D.~Warde-Farley, Y.~Liu, C.~Pesquita, D.~Faria,
  S.~Bull, T.~Pawson, Q.~Morris, J.~L. Wrana, Dynamic modularity in protein
  interaction networks predicts breast cancer outcome, Nat Biotechnol 27~(2)
  (2009) 199--204.

\bibitem{West2013}
J.~West, S.~Beck, X.~Wang, A.~E. Teschendorff, An integrative network algorithm
  identifies age-associated differential methylation interactome hotspots
  targeting stem-cell differentiation pathways, Sci Rep 3 (2013) 1630.

\bibitem{Roy2012}
J.~Roy, C.~Winter, Z.~Isik, M.~Schroeder, Network information improves cancer
  outcome prediction, Brief Bioinform.

\bibitem{Winter2012}
C.~Winter, G.~Kristiansen, S.~Kersting, J.~Roy, D.~Aust, T.~Knosel, P.~Rummele,
  B.~Jahnke, V.~Hentrich, F.~Ruckert, M.~Niedergethmann, W.~Weichert, M.~Bahra,
  H.~J. Schlitt, U.~Settmacher, H.~Friess, M.~Buchler, H.~D. Saeger,
  M.~Schroeder, C.~Pilarsky, R.~Grutzmann, Google goes cancer: improving
  outcome prediction for cancer patients by, PLoS Comput Biol 8~(5) (2012)
  e1002511.

\bibitem{Suthram2008}
S.~Suthram, A.~Beyer, R.~M. Karp, Y.~Eldar, T.~Ideker, eqed: an efficient
  method for interpreting eqtl associations using protein, Mol Syst Biol 4
  (2008) 162.

\bibitem{Kim2011}
Y.~A. Kim, S.~Wuchty, T.~M. Przytycka, Identifying causal genes and
  dysregulated pathways in complex diseases, PLoS Comput Biol 7~(3) (2011)
  e1001095.

\bibitem{Komurov2010}
K.~Komurov, M.~A. White, P.~T. Ram, Use of data-biased random walks on graphs
  for the retrieval of context-specific networks from genomic data, PLoS Comput
  Biol 6~(8).

\bibitem{Komurov2012}
K.~Komurov, J.~T. Tseng, M.~Muller, E.~G. Seviour, T.~J. Moss, L.~Yang,
  D.~Nagrath, P.~T. Ram, The glucose-deprivation network counteracts
  lapatinib-induced toxicity in, Mol Syst Biol 8 (2012) 596.

\bibitem{Ideker2010}
S.~Bandyopadhyay, M.~Mehta, D.~Kuo, M.~K. Sung, R.~Chuang, E.~J. Jaehnig,
  B.~Bodenmiller, K.~Licon, W.~Copeland, M.~Shales, D.~Fiedler, J.~Dutkowski,
  A.~Guénolé, H.~{van Attikum}, K.~M. Shokat, R.~D. Kolodner, W.~K. Huh,
  R.~Aebersold, M.~C. Keogh, N.~J. Krogan, T.~Ideker, Rewiring of genetic
  networks in response to dna damage, Science 330~(6009) (2010) 1385--1389.

\bibitem{Califano2011}
A.~Califano, Rewiring makes the difference, Mol Syst Biol 7 (2011) 463.

\bibitem{Ideker2012}
T.~Ideker, N.~J. Krogan, Differential network biology, Mol Syst Biol 8 (2012)
  565.

\bibitem{Banerji2013}
C.~R. Banerji, D.~Miranda-Saavedra, S.~Severini, M.~Widschwendter, T.~Enver,
  J.~X. Zhou, A.~E. Teschendorff, Cellular network entropy as the energy
  potential in waddington's differentiation landscape, Sci Rep 3 (2013) 3039.

\bibitem{Chang2008nat}
H.~H. Chang, M.~Hemberg, M.~Barahona, D.~E. Ingber, S.~Huang,
  Transcriptome-wide noise controls lineage choice in mammalian progenitor
  cells, Nature 453~(7194) (2008) 544--547.

\bibitem{Macarthur2013}
B.~D. Macarthur, I.~R. Lemischka, Statistical mechanics of pluripotency, Cell
  154~(3) (2013) 484--489.

\bibitem{Furusawa2009}
C.~Furusawa, K.~Kaneko, Chaotic expression dynamics implies pluripotency: when
  theory and experiment meet, Biol Direct. 15~(4) (2009) 17.

\bibitem{Furusawa2012}
C.~Furusawa, K.~Kaneko, A dynamical systems view of stem cell biology, Science
  338~(6104) (2012) 215--7.

\bibitem{Lee2006pcgt}
T.~I. Lee, R.~G. Jenner, L.~A. Boyer, M.~G. Guenther, S.~S. Levine, R.~M.
  Kumar, B.~Chevalier, S.~E. Johnstone, M.~F. Cole, K.~Isono, H.~Koseki,
  T.~Fuchikami, K.~Abe, H.~L. Murray, J.~P. Zucker, B.~Yuan, G.~W. Bell,
  E.~Herbolsheimer, N.~M. Hannett, K.~Sun, D.~T. Odom, A.~P. Otte, T.~L.
  Volkert, D.~P. Bartel, D.~A. Melton, D.~K. Gifford, R.~Jaenisch, R.~A. Young,
  Control of developmental regulators by polycomb in human embryonic stem
  cells, Cell 125~(2) (2006) 301--313.

\bibitem{Demetrius2004}
L.~Demetrius, V.~M. Grundlach, G.~Ochs, Complexity and demographic stability in
  population models, Theo Pop Biol 65 (2004) 211--225.

\bibitem{Demetrius2005}
L.~Demetrius, T.~Manke, Robustness and network evolution-an entropic principle,
  Physica A 346 (2005) 682--696.

\bibitem{Manke2006}
T.~Manke, L.~Demetrius, M.~Vingron, An entropic characterization of protein
  interaction networks and cellular robustness, J R Soc Interface 3~(11) (2006)
  843--850.

\bibitem{Stelling2004}
J.~Stelling, U.~Sauer, Z.~Szallasi, F.~J. Doyle, J.~Doyle, Robustness of
  cellular functions, Cell 118~(6) (2004) 675--685.

\bibitem{Han2004}
J.~D. Han, N.~Bertin, T.~Hao, D.~S. Goldberg, G.~F. Berriz, L.~V. Zhang,
  D.~Dupuy, A.~J. Walhout, M.~E. Cusick, F.~P. Roth, M.~Vidal, Evidence for
  dynamically organized modularity in the yeast protein-protein interaction
  network, Nature 430~(6995) (2004) 88--93.

\bibitem{LiJingJing2010}
J.~Li, Z.~Yuan, Z.~Zhang, The cellular robustness by gene redundancy in budding
  yeast, PLoS Genet 6~(11) (2010) e1001187.

\bibitem{Engelman2008}
J.~A. Engelman, L.~Chen, X.~Tan, K.~Crosby, A.~R. Guimaraes, R.~Upadhyay,
  M.~Maira, K.~McNamara, S.~A. Perera, Y.~Song, L.~R. Chirieac, R.~Kaur,
  A.~Lightbown, J.~Simendinger, T.~Li, R.~F. Padera, C.~Garcia-Echeverria,
  R.~Weissleder, U.~Mahmood, L.~C. Cantley, K.~K. Wong, {{E}ffective use of
  {P}{I}3{K} and {M}{E}{K} inhibitors to treat mutant {K}ras {G}12{D} and
  {P}{I}{K}3{C}{A} {H}1047{R} murine lung cancers}, Nat. Med. 14~(12) (2008)
  1351--1356.

\bibitem{Sjoblom2006}
T.~Sjoblom, S.~Jones, L.~D. Wood, D.~W. Parsons, J.~Lin, T.~D. Barber,
  D.~Mandelker, R.~J. Leary, J.~Ptak, N.~Silliman, S.~Szabo, P.~Buckhaults,
  C.~Farrell, P.~Meeh, S.~D. Markowitz, J.~Willis, D.~Dawson, J.~K. Willson,
  A.~F. Gazdar, J.~Hartigan, L.~Wu, C.~Liu, G.~Parmigiani, B.~H. Park, K.~E.
  Bachman, N.~Papadopoulos, B.~Vogelstein, K.~W. Kinzler, V.~E. Velculescu, The
  consensus coding sequences of human breast and colorectal cancers, Science
  314~(5797) (2006) 268--274.

\bibitem{Wood2007}
L.~D. Wood, D.~W. Parsons, S.~Jones, J.~Lin, T.~Sjöblom, R.~J. Leary, D.~Shen,
  S.~M. Boca, T.~Barber, J.~Ptak, N.~Silliman, S.~Szabo, Z.~Dezso,
  V.~Ustyanksky, T.~Nikolskaya, Y.~Nikolsky, R.~Karchin, P.~A. Wilson, J.~S.
  Kaminker, Z.~Zhang, R.~Croshaw, J.~Willis, D.~Dawson, M.~Shipitsin, J.~K.
  Willson, S.~Sukumar, K.~Polyak, B.~H. Park, C.~L. Pethiyagoda, P.~V. Pant,
  D.~G. Ballinger, A.~B. Sparks, J.~Hartigan, D.~R. Smith, E.~Suh,
  N.~Papadopoulos, P.~Buckhaults, S.~D. Markowitz, G.~Parmigiani, K.~W.
  Kinzler, V.~E. Velculescu, B.~Vogelstein, The genomic landscapes of human
  breast and colorectal cancers, Science 318~(5853) (2007) 1108--1113.

\bibitem{Vogelstein2013}
B.~Vogelstein, N.~Papadopoulos, V.~E. Velculescu, S.~Zhou, L.~A. Diaz, K.~W.
  Kinzler, Cancer genome landscapes, Science 339~(6127) (2013) 1546--58.

\bibitem{Latora1999}
V.~Latora, M.~Baranger, Kolmogorov-sinai entropy rate versus physical entropy,
  Phys Rev Lett 82~(3) (1999) 520--524.

\bibitem{GomezGardenes2008}
J.~Gomez-Gardenes, V.~Latora, Entropy rate of diffusion processes on complex
  networks, Phys Rev E Stat Nonlin Soft Matter Phys 78~(6) (2008) 065102.

\bibitem{Friedman1991}
J.~Friedman, On the second eigenvalue and random walks in random d-regular
  graphs, Combinatorica 11 (1991) 331--262.

\bibitem{Barretina2012}
J.~Barretina, G.~Caponigro, N.~Stransky, K.~Venkatesan, A.~A. Margolin, S.~Kim,
  C.~J. Wilson, J.~Lehár, G.~V. Kryukov, D.~Sonkin, A.~Reddy, M.~Liu,
  L.~Murray, M.~F. Berger, J.~E. Monahan, P.~Morais, J.~Meltzer, A.~Korejwa,
  J.~Jané-Valbuena, F.~A. Mapa, J.~Thibault, E.~Bric-Furlong, P.~Raman,
  A.~Shipway, I.~H. Engels, J.~Cheng, G.~K. Yu, J.~Yu, P.~{Aspesi Jr}, M.~{de
  Silva}, K.~Jagtap, M.~D. Jones, L.~Wang, C.~Hatton, E.~Palescandolo,
  S.~Gupta, S.~Mahan, C.~Sougnez, R.~C. Onofrio, T.~Liefeld, L.~MacConaill,
  W.~Winckler, M.~Reich, N.~Li, J.~P. Mesirov, S.~B. Gabriel, G.~Getz,
  K.~Ardlie, V.~Chan, V.~E. Myer, B.~L. Weber, J.~Porter, M.~Warmuth, P.~Finan,
  J.~L. Harris, M.~Meyerson, T.~R. Golub, M.~P. Morrissey, W.~R. Sellers,
  R.~Schlegel, L.~A. Garraway, The cancer cell line encyclopedia enables
  predictive modelling of anticancer drug sensitivity, Nature 483~(7391) (2012)
  603--607.

\bibitem{Ahmed2011}
J.~Ahmed, T.~Meinel, M.~Dunkel, M.~S. Murgueitio, R.~Adams, C.~Blasse,
  A.~Eckert, S.~Preissner, R.~Preissner, Cancerresource: a comprehensive
  database of cancer-relevant proteins and compound interactions supported by
  experimental knowledge, Nucleic Acids Res 39~(Database) (2011) D960--D967.

\bibitem{Speicher2013}
M.~R. Speicher, {{S}ingle-cell analysis: toward the clinic}, Genome Med 5~(8)
  (2013) 74.

\bibitem{Kandasamy2010}
K.~Kandasamy, S.~S. Mohan, R.~Raju, S.~Keerthikumar, G.~S. Kumar, A.~K.
  Venugopal, D.~Telikicherla, J.~D. Navarro, S.~Mathivanan, C.~Pecquet, S.~K.
  Gollapudi, S.~G. Tattikota, S.~Mohan, H.~Padhukasahasram, Y.~Subbannayya,
  R.~Goel, H.~K. Jacob, J.~Zhong, R.~Sekhar, V.~Nanjappa, L.~Balakrishnan,
  R.~Subbaiah, Y.~L. Ramachandra, B.~A. Rahiman, T.~S. Prasad, J.~X. Lin, J.~C.
  Houtman, S.~Desiderio, J.~C. Renauld, S.~N. Constantinescu, O.~Ohara,
  T.~Hirano, M.~Kubo, S.~Singh, P.~Khatri, S.~Draghici, G.~D. Bader, C.~Sander,
  W.~J. Leonard, A.~Pandey, Netpath: a public resource of curated signal
  transduction pathways, Genome Biol 11~(1) (2010) R3.

\bibitem{Lovasz1993}
L.~Lovasz, Random walks on graphs: A survey, in: Combinatorics, Paul Erdos is
  Eighty, Vol.~2 of Mathematical Studies, Bolyai Society, Keszthely, Hungary,
  1993, pp. 1--46.

\end{thebibliography}



\section*{Figures}





\end{document}